\newtheorem{theorem}{Theorem}
\newtheorem{lemma}{Lemma}
\theoremstyle{definition}
\newtheorem{definition}{Definition}
\newtheorem{assumption}{Assumption}
\newtheorem{remark}{Remark}
\newtheorem{problem}{Problem}
\newcommand{\subjectto}{\operatorname{subject~to}}
\newcommand{\R}{\mathbb{R}}
\newcommand{\T}{^\top}
\newcommand{\KL}{\mathcal{KL}}
\newcommand{\Kinf}{\mathcal{K}_{\infty}}
\title{\bf{Modular Adaptive Safety-Critical Control}}
\author{Max H. Cohen and Calin Belta %
\thanks{The authors are with the Department of Mechanical Engineering, Boston University, Boston, MA, USA \{\texttt{maxcohen}, \texttt{cbelta}\}@\texttt{bu.edu}. This work is supported by the NSF under grants DGE-1840990
and IIS-2024606. Any opinions, findings, conclusions or recommendations expressed in this material are those of the author(s) and do not necessarily reflect the views of the NSF.}
}
\begin{document}
    \maketitle

    \begin{abstract}
        This paper presents an adaptive control approach for uncertain nonlinear systems subject to safety constraints that allows for modularity in the selection of the parameter estimation algorithm. Such modularity is achieved by unifying the concepts of input-to-state stability (ISS) and input-to-state safety (ISSf) via control Lyapunov functions (CLFs) and control barrier functions (CBFs), respectively. In particular, we propose a class of exponential ISS-CLFs and ISSf high order CBFs that can be combined with a general class of parameter estimation algorithms akin to those found in the literature on concurrent learning adaptive control. We demonstrate that the unification of ISS and ISSf in an adaptive control setting allows for maintaining a single set of parameter estimates for both the CLF and CBF that can be generated by a class of update laws satisfying a few general properties. The modularity of our approach is demonstrated via numerical examples by comparing performance in terms of stability and safety across different parameter estimation algorithms.
    \end{abstract}

    \section{Introduction}
    Adaptive control theory \cite{Krstic,Slotine} is concerned with simultaneous learning and control of uncertain dynamical systems. In traditional adaptive control, learning often manifests itself as the estimation of uncertain parameters associated with the underlying dynamical system, whereas control is synonymous with stabilization to a set point or tracking of a desired reference trajectory. In the context of adaptive control of nonlinear systems, designs are often classified as either \emph{Lyapunov-based} \cite{KrsticSCL95} or \emph{modular} \cite{KrsticAutomatica96}. Lyapunov-based designs \cite[Ch. 3-4]{Krstic} typically rely on the design of a control Lyapunov function (CLF) \cite{SontagSCL89} for a modified dynamical system, where potentially destabilizing parameter estimation errors are eliminated through the use of a Lyapunov-based parameter update law. Such designs have the benefit of guaranteeing asymptotic stability (or even exponential stability \cite{ChowdharyCDC10,ChowdharyACC11,ChowdharyIJACSP13,KamalapurkarTAC17,DixonIJACSP19}), but restrict the design of the parameter estimation algorithm since it is tightly coupled to the associated CLF. On the other hand, modular designs \cite[Ch. 5-6]{Krstic} decouple the design of controller and update law: a controller and parameter estimator satisfying a few general properties can be combined to enforce weaker forms of stability, such as input-to-state stability (ISS) \cite{SontagTAC89}. Allowing modularity in the estimation algorithm is motivated by the fact that various estimation procedures may provide certain benefits compared to Lyapunov-based update laws. For example, least-squares based estimation algorithms generally exhibit faster convergence than gradient descent based estimation algorithms employed in Lyapunov-based approaches. 
    
    Motivated by the need for certifiably correct behavior of modern autonomous systems, adaptive control techniques have recently been applied to more complicated control problems, such as guaranteeing safety \cite{TaylorACC20,LopezLCSS21,HovakimyanCDC20,DixonACC21,PanagouECC21,AzimiTAC21,CohenACC22}, often formalized using set theoretic notions \cite{Blanchini}, and enforcing more general temporal logic specifications \cite{SadraCDC17,CohenNAHS23,OzayHSCC22}. In \cite{TaylorACC20}, classical Lyapunov-based adaptive control designs are extended to a safety-critical setting, where a control barrier function (CBF) \cite{AmesTAC17} is designed for a modified dynamical system and a CBF-based adaptive update law is leveraged to eliminate parameter estimation errors that could otherwise lead to safety violation. Although \cite{TaylorACC20} provides a foundation for extending traditional nonlinear adaptive control designs to enforcing safety using CBFs, the particular approach taken therein is conservative in the sense that the proposed method restricts the system to the nonnegative superlevel sets of the safe set, rather than only the zero superlevel set as is common when using CBFs \cite{LopezLCSS21}. By leveraging known bounds on the system's uncertain parameters, works such as \cite{LopezLCSS21,HovakimyanCDC20,DixonACC21,PanagouECC21,AzimiTAC21} reduce such conservatism by taking a ``robust adaptive'' approach whereby safety is guaranteed by accounting for the worst-case parameter estimation error, which is reduced online as more data about the system is collected. In \cite{CohenACC22} such robust adaptive approaches are extended to CBFs with high relative degree \cite{SreenathACC16,WeiTAC21-hocbf,DimosTAC22-hocbf,PanagouCDC21} in which the control input may not directly influence the derivative of the CBF candidate. 

    The aforementioned approaches to enforcing safety using adaptive control techniques have demonstrated success on a wide variety of problems; however, they suffer from a combination of the following limitations: 1) they restrict the update laws/estimation algorithms that can be used to guarantee safety \cite{TaylorACC20,LopezLCSS21,HovakimyanCDC20,DixonACC21,PanagouECC21,AzimiTAC21,CohenACC22}; 2) they require precise knowledge of the bounds on the system parameters/uncertainties \cite{LopezLCSS21,HovakimyanCDC20,DixonACC21,PanagouECC21,AzimiTAC21,CohenACC22}; 3) they require redundant parameter estimation in the sense that multiple estimates of the same parameters are needed if the adaptive safety controller is combined with an adaptive stabilizing controller to achieve a performance objective \cite{TaylorACC20,LopezLCSS21,DixonACC21,CohenACC22}. The main objective of this paper is to develop a framework for modular adaptive safety-critical control that addresses the previous limitations by 1) allowing freedom in the selection of the parameter update law/estimation algorithm, 2) not requiring precise knowledge of bounds on the system parameters/uncertainties, and 3) allowing for the use of a single set of parameter estimates that are shared between the safety and performance (stabilizing) controller. We accomplish this objective by unifying the concept of ISS with that of \emph{input-to-state safety} (ISSf) \cite{AmesLCSS19,AmesLCSS22-ISSf,AmesArXiV-ISSf}, extending ideas from traditional modular nonlinear adaptive control \cite{KrsticAutomatica96} to a safety-critical setting. In particular, we show how a general class of parameter estimation algorithms can be combined with a particular class of ISS-CLFs \cite{AmesACC18-ISS-CLF} and a class of ISSf \emph{high order} CBFs to simultaneously guarantee ISS and ISSf of the underlying system in a modular fashion. The drawback of this modularity is that we establish safety using an ISSf framework, which studies the invariance of inflated safe sets parameterized by the magnitude of a disturbance (parameter estimation error) perturbing the nominal system dynamics.

    The contributions of this paper are threefold. First, we present a modular approach to nonlinear adaptive stabilization using a class of exponential ISS-CLFs \cite{AmesACC18-ISS-CLF}. We combine such CLFs with a class of parameter estimators characteristic of those found in \emph{concurrent learning} adaptive control \cite{ChowdharyCDC10,ChowdharyACC11,ChowdharyIJACSP13,KamalapurkarTAC17,DixonIJACSP19} and show that our approach guarantees ISS and, under suitable excitation conditions outlined in \cite{ChowdharyCDC10,ChowdharyACC11,ChowdharyIJACSP13,KamalapurkarTAC17,DixonIJACSP19}, asymptotic stability of the origin. Second, we generalize the class of ISSf-CBFs introduced in \cite{AmesLCSS19,AmesLCSS22-ISSf,AmesArXiV-ISSf} to high relative degree safety constraints \cite{SreenathACC16,WeiTAC21-hocbf,DimosTAC22-hocbf,PanagouCDC21}, which allows for the systematic construction of a candidate safe set from a given safety constraint on the lower order system dynamics. Finally, we demonstrate the versatility of our approach via numerical examples in which the uncertain parameters of a system are learned online using different update laws while guaranteeing ISS and ISSf.

    The remainder of this paper is organized as follows. Sec. \ref{sec:prelim} covers preliminaries on ISS and ISSf. Sec. \ref{sec:stability} presents our approach to modular adaptive stabilization. Sec. \ref{sec:safety} introduces the notion of an ISSf high order CBF. Sec. \ref{sec:sim} presents simulations and Sec. \ref{sec:conclusion} contains concluding remarks.

    \section{Preliminaries and Problem Formulation}\label{sec:prelim}
    \paragraph*{Notation}
    A continuous function $\alpha\,:\,\R_{\geq0}\rightarrow\R_{\geq0}$ is said to be a class $\mathcal{K}_{\infty}$ function, denoted by $\alpha\in\mathcal{K}_{\infty}$, if $\alpha(0)=0$, $\alpha$ is strictly increasing, and $\lim_{r\rightarrow\infty}\alpha(r)=\infty$. A continuous function $\alpha\,:\,\R\rightarrow\R$ is said to be an extended class $\mathcal{K}_{\infty}$ function, denoted by $\alpha\in\mathcal{K}_{\infty}^e$, if $\alpha(0)=0$, $\alpha$ is strictly increasing, $\lim_{r\rightarrow\infty}\alpha(r)=\infty$, and $\lim_{r\rightarrow-\infty}\alpha(r)=-\infty$. A continuous function $\beta\,:\,\R_{\geq0}\times\R_{\geq0}\rightarrow\R_{\geq0}$ is said to be a class $\mathcal{KL}_{\infty}$ function, denoted by $\beta\in \mathcal{KL}_{\infty}$, if $\beta(\cdot,s)\in\mathcal{K}_{\infty}$ for all $s\in\R_{\geq0}$, $\beta(r,\cdot)$ is decreasing in its second argument, and $\lim_{s\rightarrow\infty}\beta(r,s)=0$ for all $r\in\R_{\geq0}$. The Euclidean norm is denoted by $\|\cdot\|$. The spaces of bounded and square integrable functions are denoted by $\mathcal{L}_{\infty}$ and $\mathcal{L}_2$, respectively. Given a bounded and piecewise continuous function $d\,:\,\R_{\geq0}\rightarrow\R^m$, its $\mathcal{L}_{\infty}$ norm is denoted by $\|d\|_{\infty}=\sup_{t\in\R_{\geq0}}\|d(t)\|<\infty$. Given a continuously differentiable scalar function $h\,:\,\R^n\rightarrow\R$ and a vector field $f\,:\,\R^n\rightarrow\R^n$, the Lie derivative of $h$ along $f$ is denoted by $L_fh(x)=\nabla h(x)f(x)$, where $\nabla h\,:\,\R^n\rightarrow\R^{1\times n}$ is the gradient of $h$. The boundary and interior of a closed set $\mathcal{C}$ are denoted by $\partial\mathcal{C}$ and $\text{Int}(\mathcal{C})$.

    \subsection{Input-to-state stability}\label{sec:ISS}
    Consider the uncertain nonlinear control affine system
    \begin{equation}\label{eq:dyn}
        \dot{x} = f(x) + F(x)\theta +  g(x)u,
    \end{equation}
    where $x\in\R^n$ is the system state, $u\in\R^m$ is the control input, and $\theta\in\R^p$ is a vector of uncertain parameters. The vector field $f\,:\,\R^n\rightarrow\R^n$ captures the system drift dynamics, the columns of $g\,:\,\R^n\rightarrow\R^{n\times m}$ represent vector fields describing the control directions, and the matrix-valued function $F\,:\,\R^{n}\rightarrow\R^{n\times p}$ is a known basis of nonlinear features for the uncertain parameters. We assume $f$, $g$, and $F$, are locally Lipschitz continuous, and that $f(0)=0$ and $F(0)=0$ so that the origin is an equilibrium point of \eqref{eq:dyn} with $u=0$. Our main objective is to learn the uncertain parameters in \eqref{eq:dyn} completely online, while guaranteeing stability and safety. To this end, let $\hat{\theta}\in\R^p$ be an estimate of $\theta$ (the true unknown value of the model parameters) and define $\tilde{\theta}\coloneqq\theta - \hat{\theta}$ as the parameter estimation error. Using $\tilde{\theta}$, system \eqref{eq:dyn} is equivalent to
    \begin{equation}\label{eq:dyn-iss}
        \dot{x} = f(x) + F(x)\hat{\theta} + g(x)u + F(x)\tilde{\theta},
    \end{equation}
    and given a controller $k\,:\,\R^n\times\R^p\rightarrow\R^m$, locally Lipschitz in both its arguments, we obtain the closed-loop system by fixing $u=k(x,\hat{\theta})$ as
    \begin{equation}\label{eq:dyn-cl}
        \dot{x} = f(x) + F(x)\hat{\theta} + g(x)k(x,\hat{\theta}) + F(x)\tilde{\theta}.
    \end{equation}
    When $\tilde{\theta}$ is viewed as a disturbance input to the nominal closed-loop dynamics $f(x) + F(x)\hat{\theta} + g(x)k(x,\hat{\theta})$, an elegant tool for studying the stability of \eqref{eq:dyn-cl} is the notion of \emph{input-to-state stability}.
    \begin{definition}[\cite{Krstic}]\label{def:ISS}
        System \eqref{eq:dyn-cl} is said to be \emph{input-to-state stable} (ISS) if for each initial condition $x(0)\in\R^n$ and each $\tilde{\theta}(\cdot)\in\mathcal{L}_{\infty}$ the trajectory of the closed-loop system \eqref{eq:dyn-cl} satisfies
        \begin{equation}
            \|x(t)\| \leq \beta(\|x(0)\|,t) + \iota(\|\tilde{\theta}\|_{\infty}),\quad \forall t\in\R_{\geq0},
        \end{equation}
        for some $\beta\in\KL$ and $\iota\in\Kinf$. If $\beta(r,s)= c r e^{-\lambda s}$ for some positive constants $c,\lambda\in\R_{>0}$, then \eqref{eq:dyn-iss} is said to be \emph{exponentially ISS} (eISS).
    \end{definition}

    \subsection{Input-to-state safety}\label{sec:ISSf}
    In addition to stability, we are interested in studying the safety properties of \eqref{eq:dyn}, which is often associated with the concept of set invariance \cite{Blanchini}. Given a locally Lipschitz feedback controller for \eqref{eq:dyn}, a set
    \begin{equation}\label{eq:C}
        \mathcal{C} = \{x\in\R^n\,|\,h(x)\geq 0\},
    \end{equation}
    where $h\,:\,\R^n\rightarrow\R$ is continuously differentiable, is said to be \emph{forward invariant} if the resulting solution $x\,:\,I(x(0))\rightarrow\R^n$ satisfies $x(t)\in\mathcal{C}$ for all $t\in I(x(0))$, where $I(x(0))\subseteq \R_{\geq 0}$ is the solution's maximal interval of existence from an initial condition of $x(0)\in\R^n$. When studying the disturbed system \eqref{eq:dyn-iss}, the notion of \emph{input-to-state safety} (ISSf), introduced in \cite{AmesLCSS19,AmesLCSS22-ISSf,AmesArXiV-ISSf}, provides a methodology to study the impact of uncertainties on safety. In particular, the ISSf framework is concerned with establishing the forward invariance of an inflated version of \eqref{eq:C} defined as
    \begin{equation}\label{eq:C-ISSf}
        \begin{aligned}
            \mathcal{C}_\delta = &\{x\in\R^n\,|\,h(x) + \gamma(\delta) \geq 0\},
        \end{aligned}
    \end{equation}
   where $\gamma\in\mathcal{K}_{\infty}$.

    \begin{definition}[\cite{AmesArXiV-ISSf}]
        System \eqref{eq:dyn-cl} is said to be \emph{input-to-state safe} (ISSf) on a set $\mathcal{C}$ as in \eqref{eq:C} if there exists a
        $\gamma$ as in \eqref{eq:C-ISSf} such that for all $\delta\in\R_{\geq0}$ and all $\tilde{\theta}(\cdot)\in\mathcal{L}_{\infty}$ satisfying $\|\tilde{\theta}\|_{\infty}\leq \delta$, the set $\mathcal{C}_\delta$ defined as in \eqref{eq:C-ISSf} is forward invariant.
    \end{definition}
    
    \begin{problem}
        Given system \eqref{eq:dyn-iss} and a set $\mathcal{C}\subset\R^n$, construct a control policy $u=k(x,\hat{\theta})$ and a parameter update law $\dot{\hat{\theta}}$ such that the closed-loop system is ISS and ISSf on $\mathcal{C}$.
    \end{problem}

    \section{Modular Adaptive ISS}\label{sec:stability}
    In this section we introduce a modular adaptive control approach to stabilization by exploiting a class of exponential ISS control Lyapunov functions (eISS-CLF) \cite{AmesACC18-ISS-CLF}. Although our development leverages ISS-CLFs, we illustrate that, under suitable assumptions, the controller derived from this ISS-CLF guarantees asymptotic stability of the closed-loop system, rather than ISS. In principle our approach can be used with any parameter estimation algorithm that guarantees boundedness of the estimation error; however, we specialize our results to a particular class of parameter estimators whose characteristics are outlined in the following lemma.
    \begin{lemma}\label{lemma:estimator}
        Consider a parameter update law $\dot{\hat{\theta}}=\tau(\hat{\theta},t)$, with $\tau$ locally Lipschitz in its first argument and piecewise continuous in its second, and a Lyapunov-like function $V_{\theta}\,:\,\R^p\times \R_{\geq 0}\rightarrow\R_{\geq 0}$, continuously differentiable in both its arguments, satisfying
        \begin{equation}\label{eq:update_law1}
            \eta_1\|\tilde{\theta}\|^2 \leq V_{\theta}(\tilde{\theta},t) \leq \eta_2\|\tilde{\theta}\|^2,\quad \forall (\tilde{\theta},t)\in\R^p\times\R_{\geq0},
        \end{equation}
        for some $\eta_1,\eta_2\in\R_{>0}$. Provided 
        \begin{equation}\label{eq:update_law2}
            \dot{V}_{\theta}(\tilde{\theta},t)\leq 0, \quad \forall (\tilde{\theta},t)\in\R^p\times\R_{\geq0},
        \end{equation}
        then $\tilde{\theta}(\cdot)\in\mathcal{L}_{\infty}$. Furthermore, if there exists a pair $(\eta_3,T)\in\R_{>0}\times\R_{\geq0}$ such that 
        \begin{equation}\label{eq:update_law3}
            \dot{V}_{\theta}(\tilde{\theta},t) \leq -\eta_3\|\tilde{\theta} \|^2, \quad \forall (\tilde{\theta},t)\in\R^p\times\R_{\geq T},
        \end{equation}
        then $\tilde{\theta}(\cdot)\in\mathcal{L}_2\cap\mathcal{L}_{\infty}$ and 
        \begin{equation}\label{eq:theta_bound}
            \|\tilde{\theta}(t)\| \leq \frac{\eta_2}{\eta_1}\|\tilde{\theta}(0)\|e^{-\frac{\eta_3}{2 \eta_2}(t- T)}, \quad \forall t\in\R_{\geq0}.
        \end{equation}
    \end{lemma}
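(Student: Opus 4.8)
The plan is to prove the three claims in sequence, each building on standard Lyapunov-style arguments applied to the function $V_\theta$.

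The plan is to treat $V_\theta$ as a time-varying Lyapunov function for the error dynamics $\dot{\tilde\theta}=-\tau(\hat\theta,t)$ (recalling that $\theta$ is constant, so $\dot{\tilde\theta}=-\dot{\hat\theta}$) and to extract all three conclusions from the sandwich bound \eqref{eq:update_law1} together with the decrease conditions \eqref{eq:update_law2}--\eqref{eq:update_law3}. For the first claim I would observe that \eqref{eq:update_law2} makes $t\mapsto V_\theta(\tilde\theta(t),t)$ nonincreasing, so $V_\theta(\tilde\theta(t),t)\le V_\theta(\tilde\theta(0),0)$ for all $t$. Combining the lower bound in \eqref{eq:update_law1} at time $t$ with the upper bound at time $0$ yields $\eta_1\|\tilde\theta(t)\|^2\le\eta_2\|\tilde\theta(0)\|^2$, hence $\|\tilde\theta(t)\|\le\sqrt{\eta_2/\eta_1}\,\|\tilde\theta(0)\|$, which establishes $\tilde\theta(\cdot)\in\mathcal{L}_\infty$.

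For the $\mathcal{L}_2$ claim I would integrate the stronger decrease \eqref{eq:update_law3} over $[T,t]$: since $V_\theta\ge 0$, the resulting inequality $V_\theta(\tilde\theta(t),t)-V_\theta(\tilde\theta(T),T)\le-\eta_3\int_T^t\|\tilde\theta(s)\|^2\,ds$ forces $\int_T^\infty\|\tilde\theta(s)\|^2\,ds\le V_\theta(\tilde\theta(T),T)/\eta_3<\infty$. Together with boundedness of $\tilde\theta$ on the finite interval $[0,T]$ from the first part, this gives $\tilde\theta(\cdot)\in\mathcal{L}_2\cap\mathcal{L}_\infty$.

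For the exponential estimate I would convert the quadratic decrease into a linear differential inequality in $V_\theta$. Using $\|\tilde\theta\|^2\ge V_\theta/\eta_2$ from \eqref{eq:update_law1} in \eqref{eq:update_law3} gives $\dot V_\theta\le-(\eta_3/\eta_2)V_\theta$ for all $t\ge T$; the comparison lemma then yields $V_\theta(\tilde\theta(t),t)\le V_\theta(\tilde\theta(T),T)\,e^{-(\eta_3/\eta_2)(t-T)}$. Re-applying \eqref{eq:update_law1} (lower bound on the left, upper bound on the right) produces $\|\tilde\theta(t)\|\le\sqrt{\eta_2/\eta_1}\,\|\tilde\theta(T)\|\,e^{-(\eta_3/2\eta_2)(t-T)}$ for $t\ge T$.

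The main obstacle --- really the only nonroutine point --- is matching the exact constant $\eta_2/\eta_1$ and the reference to $\|\tilde\theta(0)\|$ (rather than $\sqrt{\eta_2/\eta_1}$ and $\|\tilde\theta(T)\|$) in \eqref{eq:theta_bound}, and extending the estimate to all $t\ge 0$. For $t\ge T$ I would chain the first part's estimate $\|\tilde\theta(T)\|\le\sqrt{\eta_2/\eta_1}\,\|\tilde\theta(0)\|$ with the displayed exponential bound, which produces exactly the factor $\eta_2/\eta_1$. For $t\in[0,T]$ the exponential factor satisfies $e^{-(\eta_3/2\eta_2)(t-T)}\ge 1$, and since $\eta_1\le\eta_2$ (forced by \eqref{eq:update_law1}) gives $\eta_2/\eta_1\ge\sqrt{\eta_2/\eta_1}$, the first part's bound $\|\tilde\theta(t)\|\le\sqrt{\eta_2/\eta_1}\,\|\tilde\theta(0)\|$ is dominated by the claimed right-hand side. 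Hence \eqref{eq:theta_bound} holds for all $t\in\R_{\ge0}$, completing the argument.
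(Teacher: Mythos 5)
Your proof is correct, and for the boundedness and exponential-decay claims it follows the paper's own argument essentially step for step: monotonicity of $V_\theta$ plus the sandwich bound \eqref{eq:update_law1} for $\mathcal{L}_\infty$, then $\dot V_\theta \le -(\eta_3/\eta_2)V_\theta$ on $[T,\infty)$ via the comparison lemma, chaining through $\|\tilde\theta(T)\|\le\sqrt{\eta_2/\eta_1}\,\|\tilde\theta(0)\|$ to produce the constant $\eta_2/\eta_1$, and extending to $[0,T]$ using $e^{-\frac{\eta_3}{2\eta_2}(t-T)}\ge 1$ there. The one place you genuinely diverge is the $\mathcal{L}_2$ claim: you integrate the quadratic decrease \eqref{eq:update_law3} directly over $[T,t]$ and invoke nonnegativity of $V_\theta$, obtaining $\int_T^\infty\|\tilde\theta(s)\|^2\,ds\le V_\theta(\tilde\theta(T),T)/\eta_3$, with the tail on $[0,T]$ handled by boundedness on a finite interval; the paper instead first derives the pointwise exponential bound \eqref{eq:theta_bound} and then integrates its square, arriving at the explicit estimate $\int_0^\infty\|\tilde\theta(s)\|^2\,ds\le \tfrac{\eta_2^3}{\eta_3\eta_1^2}\|\tilde\theta(0)\|^2 e^{\frac{\eta_3}{\eta_2}T}$. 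Your route is more elementary --- it needs neither the comparison lemma nor the lower bound in \eqref{eq:update_law1} for that step, so it would survive a weakening of the hypotheses under which the exponential bound fails --- while the paper's route buys a fully explicit $\mathcal{L}_2$ constant in terms of $\|\tilde\theta(0)\|$ and the $\eta_i$. A small point in your favor: you explicitly justify the extension of \eqref{eq:theta_bound} to $[0,T]$ by noting that \eqref{eq:update_law1} forces $\eta_1\le\eta_2$ and hence $\eta_2/\eta_1\ge\sqrt{\eta_2/\eta_1}$, a detail the paper leaves implicit when it asserts the bound is ``also valid'' on $[0,T]$.
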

    \begin{proof}
        Since $\dot{V}_{\theta}(\tilde{\theta}(t),t)\leq0$ for all $t\in\R_{\geq0}$, $V_{\theta}(\tilde{\theta}(t),t)$ is nonincreasing and $V_{\theta}(\tilde{\theta}(t),t) \leq V_{\theta}(\tilde{\theta}(0),0)$ for all $t\in\R_{\geq 0}$. Using \eqref{eq:update_law1} this implies that for all $t\in\R_{\geq0}$
        \begin{equation}\label{eq:theta_bound1}
            \|\tilde{\theta}(t)\|\leq \sqrt{\tfrac{\eta_2}{\eta_1}}\|\tilde{\theta}(0)\|,
        \end{equation}
        and thus $\tilde{\theta}(\cdot)\in\mathcal{L}_{\infty}$. For $t\in\R_{\geq T}$, $\dot{V}_{\theta}(\tilde{\theta}(t),t)\leq-\eta_3\|\tilde{\theta}(t)\|^2$, which, after using the comparison lemma \cite[Lemma 3.4]{Khalil} and \eqref{eq:update_law1}, implies
        \begin{equation}
            \begin{aligned}
                \|\tilde{\theta}(t)\| \leq & \sqrt{\tfrac{\eta_2}{\eta_1}}\|\tilde{\theta}(T)\| e^{-\frac{\eta_3}{2\eta_2}(t- T)} \\ 
                \leq & \tfrac{\eta_2}{\eta_1}\|\tilde{\theta}(0)\|e^{\frac{\eta_3}{2\eta_2}T}e^{-\frac{\eta_3}{2\eta_2}t},
            \end{aligned}
        \end{equation}
        for all $t\in \R_{\geq T}$. This bound is also valid for all $t\in\R_{\geq0}$ as stated in \eqref{eq:theta_bound} since $1 \leq \exp(-\tfrac{\eta_3}{2\eta_2}(t-T))$ for all $t\in[0,T]$. The bound in \eqref{eq:theta_bound} also implies $\tilde{\theta}(\cdot)\in\mathcal{L}_2$ since
        \begin{equation*}
            \begin{aligned}
                \int_{0}^t\|\tilde{\theta}(s)\|^2ds\leq & (\tfrac{\eta_2}{\eta_1}\|\tilde{\theta}(0)\|e^{\frac{\eta_3}{2\eta_2}T})^2\int_{0}^{t}e^{-\frac{\eta_3}{\eta_2}s}ds \\
                = & (\tfrac{\eta_2}{\eta_1}\|\tilde{\theta}(0)\|e^{\frac{\eta_3}{2\eta_2}T})^2(-\tfrac{\eta_2}{\eta_3}(e^{-\frac{\eta_3}{\eta_2}t} - 1)),
            \end{aligned}
        \end{equation*}
        which, after taking limits as $t\rightarrow\infty$, implies that
        \begin{equation}\label{eq:theta-L2}
            \int_{0}^{\infty}\|\tilde{\theta}(s)\|^2ds\leq \tfrac{\eta_2^3}{\eta_3\eta_1^2}\|\tilde{\theta}(0)\|^2e^{\frac{\eta_3}{\eta_2}T} <\infty.
        \end{equation}
        Combining \eqref{eq:theta-L2} and \eqref{eq:theta_bound1} yields $\tilde{\theta}(\cdot)\in\mathcal{L}_2\cap\mathcal{L}_{\infty}$.
    \end{proof}

    The condition in \eqref{eq:update_law2} requires that the origin of the parameter estimation error dynamics is stable in the sense of Lyapunov -- a property satisfied by standard estimation algorithms (cf. \cite{KrsticAutomatica96}). The condition in \eqref{eq:update_law3} is more restrictive. It asks, after a certain time period, for the parameter estimates to exponentially converge to their true values. Traditionally, this is only guaranteed under prohibitive persistence of excitation (PE) conditions that generally cannot be verified in practice \cite{ChowdharyCDC10}. Over the past decade, however, a suite of tools termed \emph{concurrent learning adaptive control} \cite{ChowdharyCDC10} have emerged that relax the PE condition by maintaining a sufficiently rich ``history stack" of input-output data \cite{ChowdharyACC11}. In this regard, the interval $[0,T)$ in Lemma \ref{lemma:estimator} corresponds to an initial input-output data collection phase; once a sufficiently rich dataset has been collected such data can be exploited to ensure exponential convergence of the parameter estimates (see \cite{ChowdharyACC11} for data collection strategies). We refer the interested reader to \cite{ChowdharyCDC10,ChowdharyACC11,ChowdharyIJACSP13} for a more thorough introduction and to \cite{DixonIJACSP19,KamalapurkarTAC17,KamalapurkarAutomatica16-mbrl,DeptulaAutomatica21} for specific instances of update laws satisfying the conditions of Lemma \ref{lemma:estimator}. The class of parameter estimators outlined in the preceding lemma will be combined with the notion of an eISS-CLF to establish stability of \eqref{eq:dyn-iss} in the presence of uncertain parameters.

    \begin{definition}\label{def:eISS-CLF}
        A continuously differentiable positive definite function $V\,:\,\R^n\rightarrow\R_{\geq 0}$ is said to be an \emph{exponential input-to-state stable control Lyapunov function} (eISS-CLF) for \eqref{eq:dyn-iss} if there exist positive constants $c_1,c_2,c_3,\varepsilon\in\R_{> 0}$ such that
        \begin{subequations}\label{eq:eISS-CLF}
            \begin{equation}\label{eq:eISS-CLF1}
                c_1\|x\|^2 \leq V(x) \leq c_2\|x\|^2,\quad \forall x\in\R^n,
            \end{equation}
            \begin{equation}\label{eq:eISS-CLF2}
                \begin{aligned}
                    \inf_{u\in\R^m}\dot{\hat{V}}(x,\hat{\theta},u)  < -c_3V(x) - \tfrac{1}{\varepsilon}\|L_FV(x)\|^2,
                \end{aligned}
            \end{equation}
        \end{subequations}
        for all $(x,\hat{\theta})\in(\R^n\setminus\{0\})\times\R^p$ where 
        \begin{equation*}
            \dot{\hat{V}}(x,\hat{\theta},u)\coloneqq L_fV(x) + L_FV(x)\hat{\theta} + L_gV(x)u.
        \end{equation*}
    \end{definition}
    When the uncertain parameters in \eqref{eq:dyn-iss} are \emph{matched}\footnote{The parameters \eqref{eq:dyn-iss} are matched when there exists a locally Lipschitz mapping $\varphi\,:\,\R^n\rightarrow\R^{m\times p}$ such that $F(x) = g(x)\varphi(x)$ for all $x\in\R^n$.\label{footnote:matched}} the construction of an eISS-CLF can be performed by constructing a CLF for the nominal dynamics $\dot{x}=f(x) + g(x)u$.
    We show in the following theorem that combining a parameter estimator satisfying the conditions of Lemma \ref{lemma:estimator} with a controller satisfying the conditions in \eqref{eq:eISS-CLF} renders the closed-loop system ISS and, under additional conditions, renders the origin asymptotically stable. 

    \begin{theorem}\label{theorem:iss}
        If $V$ is an eISS-CLF for \eqref{eq:dyn-iss} and the conditions of Lemma \ref{lemma:estimator} hold, then any controller $u=k(x,\hat{\theta})$ locally Lipschitz on $(\R^n\setminus\{0\})\times \R^p$ satisfying
        \begin{equation*}
            \dot{\hat{V}}(x,\hat{\theta},k(x,\hat{\theta})) \leq -c_3V(x) - \tfrac{1}{\varepsilon}\|L_FV(x)\|^2,
        \end{equation*}
        for all $(x,\hat{\theta})\in\R^{n}\times\R^p$ renders \eqref{eq:dyn-iss} eISS.
        Furthermore, $\lim_{t\rightarrow\infty}x(t)=0$.
    \end{theorem}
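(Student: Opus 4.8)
The plan is to differentiate the eISS-CLF $V$ along the closed-loop trajectories \eqref{eq:dyn-cl}, absorb the disturbance term $L_FV(x)\tilde{\theta}$ into the damping term $-\tfrac{1}{\varepsilon}\|L_FV(x)\|^2$ that is deliberately built into \eqref{eq:eISS-CLF2}, and then apply the comparison lemma \cite[Lemma 3.4]{Khalil} twice: once with the crude bound $\|\tilde{\theta}(t)\|\leq\|\tilde{\theta}\|_{\infty}$ to extract the eISS estimate, and once with the sharper exponential bound \eqref{eq:theta_bound} to obtain convergence of the state to the origin. First I would write $\dot{V}=\nabla V(x)\dot{x}$ and split it as
\begin{equation*}
    \dot{V} = \dot{\hat{V}}(x,\hat{\theta},k(x,\hat{\theta})) + L_FV(x)\tilde{\theta},
\end{equation*}
separating the nominal closed-loop contribution from the disturbance contribution $L_FV(x)\tilde{\theta}$. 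The hypothesis on $k$ gives $\dot{V}\leq -c_3V(x) - \tfrac{1}{\varepsilon}\|L_FV(x)\|^2 + L_FV(x)\tilde{\theta}$, and the key manipulation is Young's inequality in the form $L_FV(x)\tilde{\theta}\leq\|L_FV(x)\|\|\tilde{\theta}\|\leq\tfrac{1}{\varepsilon}\|L_FV(x)\|^2 + \tfrac{\varepsilon}{4}\|\tilde{\theta}\|^2$. This dominates the indefinite cross term by the damping term, leaving the clean dissipation inequality
\begin{equation*}
    \dot{V}\leq -c_3V(x) + \tfrac{\varepsilon}{4}\|\tilde{\theta}\|^2.
\end{equation*}

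To establish eISS, I would bound $\|\tilde{\theta}(t)\|\leq\|\tilde{\theta}\|_{\infty}<\infty$ (finite by Lemma \ref{lemma:estimator}) and apply the comparison lemma to obtain $V(x(t))\leq V(x(0))e^{-c_3t}+\tfrac{\varepsilon}{4c_3}\|\tilde{\theta}\|_{\infty}^2$. Sandwiching with \eqref{eq:eISS-CLF1} and using $\sqrt{a+b}\leq\sqrt{a}+\sqrt{b}$ yields
\begin{equation*}
    \|x(t)\|\leq\sqrt{\tfrac{c_2}{c_1}}\|x(0)\|e^{-\frac{c_3}{2}t} + \sqrt{\tfrac{\varepsilon}{4c_1c_3}}\|\tilde{\theta}\|_{\infty},
\end{equation*}
which is precisely the estimate of Definition \ref{def:ISS} with $\beta(r,s)=\sqrt{c_2/c_1}\,r\,e^{-(c_3/2)s}$ of the required exponential form $cre^{-\lambda s}$ and $\iota(r)=\sqrt{\varepsilon/(4c_1c_3)}\,r\in\Kinf$. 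Hence the closed-loop system is eISS.

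For the convergence claim I would keep the time-varying integral rather than passing to the supremum: the comparison lemma applied to $\dot{V}\leq-c_3V+\tfrac{\varepsilon}{4}\|\tilde{\theta}(t)\|^2$ gives $V(x(t))\leq V(x(0))e^{-c_3t}+\tfrac{\varepsilon}{4}\int_0^te^{-c_3(t-s)}\|\tilde{\theta}(s)\|^2\,ds$. Substituting the exponential bound \eqref{eq:theta_bound} makes the integrand at most $Me^{-c_3(t-s)}e^{-\frac{\eta_3}{\eta_2}s}$ for a constant $M$, and evaluating this convolution explicitly shows it vanishes as $t\to\infty$ in all cases (whether $c_3$ is larger than, smaller than, or equal to $\eta_3/\eta_2$). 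Therefore $V(x(t))\to0$, and since $c_1\|x(t)\|^2\leq V(x(t))$ this forces $\lim_{t\to\infty}x(t)=0$.

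The main obstacle is ensuring the Young's inequality constants are chosen so that the cross term is exactly offset by the $-\tfrac{1}{\varepsilon}\|L_FV(x)\|^2$ term; this is the entire purpose of that extra term in the eISS-CLF condition \eqref{eq:eISS-CLF2}, and tracking the residual coefficient $\tfrac{\varepsilon}{4}$ carefully is what makes both the explicit eISS gain and the convergence argument close. Once the dissipation inequality and the bound \eqref{eq:theta_bound} are in place, the remaining steps are routine applications of the comparison lemma.
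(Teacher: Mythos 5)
Your proposal is correct, and its first half is essentially the paper's own argument: the same split of $\dot{V}$ into $\dot{\hat{V}}(x,\hat{\theta},k(x,\hat{\theta}))$ plus the cross term $L_FV(x)\tilde{\theta}$, the same completion of squares yielding $\dot{V}\leq -c_3V(x)+\tfrac{\varepsilon}{4}\|\tilde{\theta}\|^2$, and the same comparison-lemma-plus-subadditivity step giving the eISS estimate with gain $\tfrac{1}{2}\sqrt{\varepsilon/(c_1c_3)}$. Where you genuinely diverge is the convergence claim. The paper rearranges the dissipation inequality to $c_1c_3\|x\|^2\leq\tfrac{\varepsilon}{4}\|\tilde{\theta}\|^2-\dot{V}$, integrates, and uses only the $\mathcal{L}_2$ property of $\tilde{\theta}$ from Lemma \ref{lemma:estimator} to conclude $x(\cdot)\in\mathcal{L}_2$; it must then separately verify $\dot{x}(\cdot)\in\mathcal{L}_{\infty}$ (via $\hat{\theta}(\cdot)\in\mathcal{L}_{\infty}$, Lipschitzness of $k$, hence $u(\cdot)\in\mathcal{L}_{\infty}$) in order to invoke Barbalat's lemma. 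You instead retain the time-varying comparison estimate $V(x(t))\leq V(x(0))e^{-c_3t}+\tfrac{\varepsilon}{4}\int_0^te^{-c_3(t-s)}\|\tilde{\theta}(s)\|^2\,ds$, substitute the exponential bound \eqref{eq:theta_bound}, and evaluate the convolution explicitly; this is valid, including the resonant case $c_3=\eta_3/\eta_2$ where the integral behaves like $te^{-c_3t}$. Your route buys two things the paper's does not: it bypasses Barbalat entirely, so you never need boundedness of $u$ or $\dot{x}$, and it delivers an explicit convergence rate for $V(x(t))$ (essentially exponential, governed by $\min\{c_3,\eta_3/\eta_2\}$), which is strictly stronger than the paper's bare limit statement. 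What the paper's route buys in exchange is robustness of hypothesis: its Barbalat argument uses only $\tilde{\theta}(\cdot)\in\mathcal{L}_2\cap\mathcal{L}_{\infty}$, so it would survive any weakening of \eqref{eq:update_law3} that still yields square-integrability of the estimation error, whereas your convolution computation leans on the specific exponential decay in \eqref{eq:theta_bound}. Under the theorem's stated hypotheses both facts are available from Lemma \ref{lemma:estimator}, so your proof is complete as written.
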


    \begin{proof}
        The time derivative of $V$ can be bounded as
        \begin{equation}\label{eq:Vdot}
            \begin{aligned}
                \dot{V} = & L_fV(x) + L_FV(x)\hat{\theta} + L_gV(x)k(x,\hat{\theta}) + L_FV(x)\tilde{\theta} \\ 
                \leq & -c_3V(x) - \tfrac{1}{\varepsilon}\|L_FV(x)\|^2 + L_FV(x)\tilde{\theta} \\ 
                \leq & -c_3V(x) + \tfrac{\varepsilon}{4}\|\tilde{\theta}\|^2 \\ 
                \leq & -c_3V(x) + \tfrac{\varepsilon}{4}\|\tilde{\theta}\|_{\infty}^2, \\ 
            \end{aligned}
        \end{equation}
        where the first inequality follows from $k$, the second from completing squares, and the third from $\tilde{\theta}(\cdot)\in\mathcal{L}_{\infty}$. Invoking the comparison lemma \cite[Lem. 3.4]{Khalil} and using \eqref{eq:eISS-CLF1} yields
        \begin{equation}
            \begin{aligned}
                \|x(t)\|\leq & \sqrt{\frac{c_2}{c_1}\|x(0)\|^2e^{-c_3t} + \frac{\varepsilon}{4 c_1 c_3}\|\tilde{\theta}\|^2_{\infty}} \\
                \leq & \sqrt{\frac{c_2}{c_1}}\|x(0)\|e^{-\frac{c_3}{2}t} + \frac{1}{2}\sqrt{\frac{\varepsilon}{c_1 c_3}}\|\tilde{\theta}\|_{\infty},
            \end{aligned}
        \end{equation}
        where the second inequality follows from the fact that the square root is a subadditive function, implying the closed-loop system is eISS and thus $x(\cdot)\in\mathcal{L}_{\infty}$. To show that $\lim_{t\rightarrow \infty} x(t)=0$, we rearrange the third line of \eqref{eq:Vdot} and use \eqref{eq:eISS-CLF1} to obtain
        \begin{equation}
            \begin{aligned}
                c_1c_3\|x\|^2\leq c_3V(x) \leq \tfrac{\varepsilon}{4}\|\tilde{\theta}\|^2 - \dot{V}.
            \end{aligned}
        \end{equation}
        Integrating the above over a finite time interval $[0,t]$ yields
        \begin{equation*}\label{eq:x_int1}
            \begin{aligned}
                c_1c_3\int_{0}^t\|x(s)\|^2ds \leq & \frac{\varepsilon}{4}\int_{0}^{t}\|\tilde{\theta}(s)\|^2ds - V(x(t)) + V(x(0)) \\ 
                \leq & \frac{\varepsilon}{4}\int_{0}^{t}\|\tilde{\theta}(s)\|^2ds + V(x(0)).
            \end{aligned}
        \end{equation*}
        Taking limits as $t\rightarrow\infty$ and noting that $\tilde{\theta}(\cdot)\in\mathcal{L}_2$ yields
        \begin{equation*}
            \begin{aligned}
                \int_{0}^{\infty}\|x(s)\|^2ds \leq & \frac{\varepsilon}{4c_1c_3}\int_{0}^{\infty}\|\tilde{\theta}(s)\|^2ds + \tfrac{V(x(0))}{c_1c_3} < \infty, \\ 
            \end{aligned}
        \end{equation*}
       implying $x(\cdot)\in\mathcal{L}_2$. It follows from Lemma \ref{lemma:estimator} that $\tilde{\theta}(\cdot)\in\mathcal{L}_{\infty}$ and thus $\hat{\theta}(\cdot)\in\mathcal{L}_{\infty}$. Combining this with the assumption that $u=k(x,\hat{\theta})$ is locally Lipschitz in both its arguments implies that $u(\cdot)\in\mathcal{L}_{\infty}$ and thus $\dot{x}(\cdot)\in\mathcal{L}_{\infty}$. Since $x(\cdot),\dot{x}(\cdot)\in\mathcal{L}_{\infty}$ and $x(\cdot)\in\mathcal{L}_2$, Barbalat's Lemma \cite[Corollary A.7]{Krstic} implies $\lim_{t\rightarrow\infty}x(t)=0$.
    \end{proof} 

    Given an eISS-CLF as in Def. \ref{def:eISS-CLF}, inputs satisfying the conditions of Theorem \ref{theorem:iss} can be computed for any $(x,\hat{\theta})$ by solving the quadratic program (QP)
    \begin{equation}\label{eq:eISS-CLF-QP}
        \begin{aligned}
            \min_{u\in\R^m}\quad & \tfrac{1}{2}u\T u \\
            \subjectto \quad & \dot{\hat{V}}(x,\hat{\theta},u) \leq -c_3 V(x) - \tfrac{1}{\varepsilon}\|L_FV(x)\|^2.
        \end{aligned}
    \end{equation}
    As noted in \cite{JankovicAutomatica18,AmesArXiV-ISSf}, the strict inequality in \eqref{eq:eISS-CLF} helps to establish Lipschitz continuity of the QP-based controller away from the origin and is independent of the non-strict inequality that a particular controller must satisfy to guarantee stability. This observation also applies to the CBFs outlined in the next section. Continuity at the origin can be ensured provided the eISS-CLF $V$ satisfies the \emph{small control property} (see \cite{JankovicAutomatica18} for further details). 

    \section{Modular Adaptive ISSf}\label{sec:safety}
    In this section we shift our attention to the problem of establishing safety of \eqref{eq:dyn-iss} in the presence of uncertain parameters. Importantly, we aim to establish such safety guarantees without having to commit to any particular parameter update law as in \cite{TaylorACC20,LopezLCSS21,HovakimyanCDC20,DixonACC21,PanagouECC21,AzimiTAC21,CohenACC22}. We accomplish this by unifying the ISSf framework \cite{AmesLCSS19,AmesLCSS22-ISSf,AmesArXiV-ISSf} with the high order CBF (HOCBF) framework \cite{SreenathACC16,WeiTAC21-hocbf,DimosTAC22-hocbf,PanagouCDC21}, which allows one to recursively compute a candidate safe set from a user-defined high relative degree safety constraint. We note that efforts towards this unification have been explored by \cite{XuArXiV22} in the context of safety \emph{verification} of interconnected systems. Here, we present a formulation better suited for control synthesis in the context of uncertain systems.
    \begin{definition}[\cite{PanagouCDC21}]\label{def:relative-degree}
        A function $h\,:\,\R^n\rightarrow\R$ is said to have relative degree $r\in\mathbb{N}$ for \eqref{eq:dyn-iss} with respect to $u$ on a domain $\mathcal{D}\subseteq\R^n$ if 
        \begin{enumerate}
            \item $h$ is $r$-times differentiable;
            \item for all $x\in\R^n$ and all $i\in\{0,1,\dots,r-2\}$, we have $L_gL_f^ih(x)=0$;
            \item $L_gL_f^{r-1}h(x)\neq 0$ for all $x\in\mathcal{D}$.
        \end{enumerate}
    \end{definition}
    The relative degree of a function $h$ for \eqref{eq:dyn-iss} with respect to $\tilde{\theta}$ is defined similarly by replacing $g$ with $F$ in Def. \ref{def:relative-degree}. Now consider a function $h\,:\,\R^n\rightarrow\R$ of relative degree $r\in\mathbb{N}$ for \eqref{eq:dyn-iss} with respect to $u$ and define
    \begin{equation}\label{eq:C1}
        \mathcal{S}\coloneqq \{x\in\R^n\,|\,h(x)\geq 0\},
    \end{equation}
    as the \emph{constraint set} that we desire to render invariant.
    \begin{assumption}[\cite{CohenACC22}]\label{assuption:relative-degree}
        The constraint function $h$ in \eqref{eq:C1} has relative degree $r\in\mathbb{N}$ for \eqref{eq:dyn-iss} with respect to both $u$ and $\tilde{\theta}$. 
    \end{assumption}
    The implication of Assumption \ref{assuption:relative-degree} is that \emph{both} the control input and uncertain parameters only appear in the $r$-th total derivative of $h$, which will facilitate the development of affine constraints on the control input that are sufficient to establish ISSf, and is not restrictive provided the uncertain parameters are matched (cf. Footnote \ref{footnote:matched}). Given a constraint function $h$ of relative degree $r\in\mathbb{N}$ as in \eqref{eq:C1} and a collection of sufficiently smooth $\alpha_i\in\mathcal{K}_{\infty}^e$, $i\in\{1,\dots,r\}$, we define the sequence of functions
    \begin{equation}\label{eq:psi}
        \begin{aligned}
            \psi_0(x)\coloneqq & h(x) \\ 
            \psi_i(x)\coloneqq & \dot{\psi}_{i-1}(x) + \alpha_i(\psi_{i-1}(x)),\;\forall i\in\{1,\dots,r-1\},\\
            \psi_r(x,u) =& \dot{\psi}_{r-1}(x,u) + \alpha_r(\psi_{r-1}(x)).
        \end{aligned}
    \end{equation}
    We associate to each $i\in\{1,\dots,r\}$ a set $\mathcal{C}^i\subset\R^n$ defined as the zero-superlevel set of $\psi_{i-1}$ as
    \begin{equation}\label{eq:Ci}
        \mathcal{C}^i\coloneqq\{x\in\R^n\,|\,\psi_{i-1}(x)\geq 0\},
    \end{equation}
    and define a candidate safe set\footnote{Note that $\mathcal{S}=\mathcal{C}^1$ and thus $\mathcal{C}\subset\mathcal{S}$, implying forward invariance of $\mathcal{C}$ is sufficient for satisfaction of the original safety constraint.} as
    \begin{equation}\label{eq:Ccap}
        \mathcal{C}\coloneqq \bigcap_{i=1}^r\mathcal{C}^i.
    \end{equation}
    We now aim to develop a control strategy that renders $\mathcal{C}$ from \eqref{eq:Ccap} ISSf with respect to the parameter estimation error. To this end, we define the sequence of functions
    \begin{equation}\label{eq:rho}
        \rho_{i}(x, \delta) \coloneqq\psi_{i-1}(x) + \gamma_i(\delta)
    \end{equation}
    for all $i\in\{1,\dots,r\}$, where $\psi_{i-1}$ is defined as in \eqref{eq:psi}, and each $\gamma_i\in\Kinf$. Similar to \eqref{eq:Ci}, we associate to each $\rho_i$ a set $\mathcal{C}_\delta^i\subset\R^n$ as
    \begin{equation}\label{eq:Ci_delta}
        \mathcal{C}_\delta^i\coloneqq\{x\in\R^n\,|\,\rho_i(x,\delta)\geq 0\},
    \end{equation}
    and define an inflated version of \eqref{eq:Ccap} as
    \begin{equation}\label{eq:Cd}
        \mathcal{C}_\delta\coloneqq \bigcap_{i=1}^{r}\mathcal{C}_\delta^i,
    \end{equation}
    whose forward invariance will be established using the notion of an \emph{ISSf high order CBF} (ISSf-HOCBF). 

    \begin{definition}\label{def:ISSf-HOCBF}
        A function $h\,:\,\R^n\rightarrow\R$ of relative degree $r$ with respect to $u$ is said to be an \emph{input-to-state safe high order control barrier function} for \eqref{eq:dyn-iss} on $\mathcal{C}_\delta$ defined as in \eqref{eq:Cd} if $\nabla \psi_{i-1}(x)\neq0$ for all $x\in\partial \mathcal{C}^i$ for all $i\in\{1,\dots,r\}$ and there exist sufficiently smooth $\alpha_i\in\mathcal{K}_\infty^e$, $i\in\{1,\dots,r\}$, and a positive constant $\varepsilon\in\R_{>0}$ such that for all $x\in\R^n$ and all $\hat{\theta}\in\R^p$
        \begin{equation}\label{eq:ISSf-HOCBF}
            \begin{aligned}
                \sup_{u\in\R^m}\dot{\hat{\psi}}(x,\hat{\theta},u)> -\alpha_r(\psi_{r-1}(x)) + \frac{\|L_F\psi_{r-1}(x)\|^2}{\varepsilon},
            \end{aligned}
        \end{equation}
        where 
        \begin{equation*}
            \dot{\hat{\psi}}(x,\hat{\theta},u)\coloneqq L_f\psi_{r-1}(x) + L_F\psi_{r-1}(x)\hat{\theta} + L_g\psi_{r-1}(x)u,
        \end{equation*}
        and $\psi_{r-1}$ is defined by $h$ and the choice of $\alpha_i$ from \eqref{eq:psi}.
    \end{definition}
    The requirement that $\nabla \psi_{i-1}(x)\neq 0$ for all $x\in\partial \mathcal{C}^i$ and all $i\in\{1,\dots,r\}$ in the above definition is equivalent to the requirement that $0$ is a regular value of $\psi_{i-1}$ for each $i\in\{1,\dots,r\}$, which is needed for the application of Nagumo's Theorem \cite[Thm. 4.1.28]{AbrahamMarsdenRatiu} to establish forward invariance of the inflated safe set. Given the above definition, we define the pointwise set of all control values satisfying the condition in \eqref{eq:ISSf-HOCBF} as
    \begin{equation*}
        \begin{aligned}
             \underbrace{\bigg\{u\in\R^m\,|\,\dot{\hat{\psi}}(x,\hat{\theta},u) \geq -\alpha_r(\psi_{r-1}(x)) + \tfrac{\|L_F\psi_{r-1}(x)\|^2}{\varepsilon} \bigg\}}_{\eqqcolon K_h(x,\hat{\theta})}
        \end{aligned}
    \end{equation*}
    and show in the following theorem that any locally Lipschitz controller belonging to $K_h$ renders $\mathcal{C}_\delta$ forward invariant.

    \begin{theorem}\label{theorem:issf}
        Let $h$ be an ISSf-HOCBF for \eqref{eq:dyn-iss} on $\mathcal{C}_\delta$ and suppose the conditions of Lemma \ref{lemma:estimator} and Assumption \ref{assuption:relative-degree} hold. Let $k\,:\,\R^n\times\R^p\rightarrow\R^m$ be a locally Lipschitz control policy satisfying $k(x,\hat{\theta})\in K_h(x,\hat{\theta})$ for all $(x,\hat{\theta})\in\R^n\times\R^p$ and suppose $\|\tilde{\theta}\|_{\infty}\leq\delta$. Then, the control policy $u=k(x,\hat{\theta})$ renders $\mathcal{C}_\delta$ forward invariant for the closed-loop system with each $\gamma_i$ defined as
        \begin{equation}\label{eq:gamma}
            \begin{aligned}
                \gamma_r(\delta) \coloneqq & -\alpha_r^{-1}\left(-\tfrac{\varepsilon\delta^2}{4}\right), \\
                \gamma_i(\delta)\coloneqq & -\alpha_{i}^{-1}(-\gamma_{i+1}(\delta)),\;\forall i\in\{1,\dots,r-1\}. \\
            \end{aligned}
        \end{equation}
    \end{theorem}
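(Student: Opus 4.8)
The plan is to reduce the forward-invariance claim for the intersection $\mathcal{C}_\delta = \bigcap_{i=1}^r\mathcal{C}_\delta^i$ to a chain of scalar differential inequalities, one for each function $\psi_{i-1}$, and then apply the comparison lemma at every level. The crucial observation is that the recursive definitions \eqref{eq:psi} of the $\psi_i$ and \eqref{eq:gamma} of the $\gamma_i$ are arranged so that the ``leftover'' additive constant produced at level $i+1$ becomes exactly the equilibrium offset that defines $\gamma_i$ at level $i$.

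First I would differentiate $\psi_{r-1}$ along the closed-loop dynamics \eqref{eq:dyn-cl}. By Assumption \ref{assuption:relative-degree}, neither $u$ nor $\tilde{\theta}$ appears in the derivatives of $\psi_0,\dots,\psi_{r-2}$, so $\psi_{r-1}$ is a function of $x$ alone and its total derivative is $\dot{\psi}_{r-1} = \dot{\hat{\psi}}(x,\hat{\theta},k(x,\hat{\theta})) + L_F\psi_{r-1}(x)\tilde{\theta}$. Since $k(x,\hat{\theta})\in K_h(x,\hat{\theta})$, the first term is bounded below by $-\alpha_r(\psi_{r-1}(x)) + \|L_F\psi_{r-1}(x)\|^2/\varepsilon$. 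Completing the square on $\|L_F\psi_{r-1}\|^2/\varepsilon + L_F\psi_{r-1}\tilde{\theta}$ --- exactly as in the eISS-CLF bound \eqref{eq:Vdot}, but producing a lower bound --- gives $\|L_F\psi_{r-1}\|^2/\varepsilon + L_F\psi_{r-1}\tilde{\theta} \geq -\tfrac{\varepsilon}{4}\|\tilde{\theta}\|^2 \geq -\tfrac{\varepsilon}{4}\delta^2$, where the last step uses $\|\tilde{\theta}\|_{\infty}\leq\delta$. This yields the base differential inequality $\dot{\psi}_{r-1} \geq -\alpha_r(\psi_{r-1}) - \tfrac{\varepsilon\delta^2}{4}$, valid along the entire trajectory since $k\in K_h$ holds for all $(x,\hat{\theta})$.

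Next I would run a backward induction on $i$ from $r$ down to $1$, establishing $\psi_{i-1}(x(t)) \geq -\gamma_i(\delta)$ for all $t$ in the maximal interval of existence. For the base case, the constant $-\gamma_r(\delta) = \alpha_r^{-1}(-\tfrac{\varepsilon\delta^2}{4})$ is precisely the equilibrium of the scalar system $\dot{y} = -\alpha_r(y) - \tfrac{\varepsilon\delta^2}{4}$; since $\alpha_r\in\mathcal{K}_\infty^e$ is strictly increasing, this equilibrium is globally asymptotically stable and trajectories initialized above it remain above it. Because $x(0)\in\mathcal{C}_\delta\subseteq\mathcal{C}_\delta^r$ gives $\psi_{r-1}(x(0))\geq -\gamma_r(\delta)$, comparing $\psi_{r-1}(x(t))$ against the constant equilibrium solution through the comparison lemma \cite[Lem. 3.4]{Khalil} yields $\psi_{r-1}(x(t))\geq -\gamma_r(\delta)$, i.e. $x(t)\in\mathcal{C}_\delta^r$. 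For the inductive step, the identity $\psi_i = \dot{\psi}_{i-1} + \alpha_i(\psi_{i-1})$ together with the hypothesis $\psi_i(x(t)) \geq -\gamma_{i+1}(\delta)$ gives $\dot{\psi}_{i-1} \geq -\alpha_i(\psi_{i-1}) - \gamma_{i+1}(\delta)$; since $-\gamma_i(\delta) = \alpha_i^{-1}(-\gamma_{i+1}(\delta))$ is the equilibrium of $\dot{y} = -\alpha_i(y) - \gamma_{i+1}(\delta)$ and $x(0)\in\mathcal{C}_\delta^i$, the same comparison argument delivers $\psi_{i-1}(x(t))\geq -\gamma_i(\delta)$, i.e. $x(t)\in\mathcal{C}_\delta^i$. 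Collecting these conclusions over all $i$ shows $x(t)\in\mathcal{C}_\delta$ for all $t$, which is the asserted forward invariance.

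I expect the main obstacle to be bookkeeping the recursion cleanly rather than any single hard estimate. One must verify that the base constant $\tfrac{\varepsilon\delta^2}{4}$ and each subsequent offset $\gamma_{i+1}(\delta)$ are nonnegative (so that each $\gamma_i\in\mathcal{K}_\infty$ and the equilibria $-\gamma_i(\delta)$ lie where claimed), and that the inequality fed into level $i$ holds along the \emph{whole} trajectory rather than merely on $\partial\mathcal{C}_\delta^i$. It is precisely this global validity that lets me compare against a constant equilibrium solution via the comparison lemma instead of invoking Nagumo's theorem on the intersection $\mathcal{C}_\delta$, for which the regular-value condition $\nabla\psi_{i-1}\neq 0$ on $\partial\mathcal{C}^i$ would be needed to control the tangent cone. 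The completing-the-squares step and the matching of each $\gamma_i$ to the comparison equilibrium are then routine once the recursive structure is in place.
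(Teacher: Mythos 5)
Your proof is correct, and although it derives exactly the same cascade of differential inequalities as the paper ($\dot{\psi}_{r-1} \geq -\alpha_r(\psi_{r-1}) - \tfrac{\varepsilon\delta^2}{4}$ by completing squares, then $\dot{\psi}_{i-1} \geq -\alpha_i(\psi_{i-1}) - \gamma_{i+1}(\delta)$ at each lower level), it closes the argument by a genuinely different mechanism. The paper checks the sign of $\dot{\rho}_i$ only on each boundary $\partial\mathcal{C}_\delta^i$ and invokes Nagumo's theorem \cite[Thm. 4.1.28]{AbrahamMarsdenRatiu} level by level, deferring the tail of the recursion to the proof of \cite[Thm. 3]{WeiTAC21-hocbf}; this is precisely why Definition \ref{def:ISSf-HOCBF} carries the regular-value hypothesis $\nabla\psi_{i-1}(x)\neq 0$. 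You instead observe that $-\gamma_i(\delta)$ is, by the construction in \eqref{eq:gamma}, exactly the equilibrium of the scalar system $\dot{y} = -\alpha_i(y) - \gamma_{i+1}(\delta)$ (with the top-level offset read as $\tfrac{\varepsilon\delta^2}{4}$), and run a backward induction comparing against these constant solutions. Your route buys three things: it is self-contained, with no appeal to \cite{WeiTAC21-hocbf}; it makes no use of the regularity condition at all --- which is welcome, since the paper states that condition on $\partial\mathcal{C}^i$ whereas Nagumo applied to the \emph{inflated} sets would actually want it on $\partial\mathcal{C}_\delta^i = \{x\,|\,\psi_{i-1}(x) = -\gamma_i(\delta)\}$, a mismatch your argument sidesteps entirely; and it requires only uniqueness of solutions to the scalar comparison ODEs, supplied by the ``sufficiently smooth $\alpha_i$'' hypothesis. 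What it costs is the slightly stronger requirement that each differential inequality hold along the \emph{entire} trajectory rather than only on the relevant boundary --- available here, as you note, because $k(x,\hat{\theta})\in K_h(x,\hat{\theta})$ is assumed for all $(x,\hat{\theta})$ and each inductive conclusion holds on the whole maximal interval. One bookkeeping caveat you should make explicit in a final write-up: \cite[Lem. 3.4]{Khalil} is stated as an upper-bound comparison, so your lower bounds need the sign-flipped version (apply the lemma to $-\psi_{i-1}$, whose right-hand side $\alpha_i(-w) + \gamma_{i+1}(\delta)$ remains locally Lipschitz); this is routine but worth a sentence.
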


    \begin{proof}
        Taking the time derivative of $\psi_{r-1}$ yields
        \begin{equation*}
            \begin{aligned}
                \dot{\psi}_{r-1} = & \dot{\hat{\psi}}(x,\hat{\theta},k(x,\hat{\theta})) + L_F\psi_{r-1}(x)\tilde{\theta} \\ 
                \geq & - \alpha_r(\psi_{r-1}(x)) + \tfrac{\|L_F\psi_{r-1}(x)\|^2}{\varepsilon} 
                - \|L_F\psi_{r-1}(x)\| \delta \\
                \geq & - \alpha_r(\psi_{r-1}(x)) - \tfrac{\varepsilon\delta^2}{4}, 
            \end{aligned}
        \end{equation*}
        where the first inequality follows from the definition of $K_h$ and $L_F\psi_{r-1}(x)\tilde{\theta}\geq -\| L_F\psi_{r-1}(x)\|\delta$, and the second from completing squares. Noting that $\dot{\psi}_{r-1}=\dot{\rho}_r$ we also have
        \begin{equation}\label{eq:rho_r_dot}
            \begin{aligned}
                \dot{\rho}_r \geq & - (\alpha_r(\psi_{r-1}(x)) + \tfrac{\varepsilon\delta^2}{4}).
            \end{aligned}
        \end{equation}
        Note that for $x\in\mathcal{C}_\delta$ we have $\rho_r(x,\delta) \geq 0$ and thus $\psi_{r-1}(x)\geq - \gamma_r(\delta) $, which, by the monotonicity of $\alpha_r$ and the definition of $\gamma_r$ from \eqref{eq:gamma}, implies that $\alpha_{r}(\psi_{r-1}(x)) \geq -\varepsilon\delta^2/4$. It then follows from \eqref{eq:rho_r_dot} that $\rho_r$ may decrease for $x\in \mathcal{C}_\delta$, i.e., the system may approach $\partial \mathcal{C}_\delta ^r$.
        Our objective is now to show that $\dot{\rho}_r\geq 0$ for any $x\in \partial \mathcal{C}_\delta^r$, which, along with the assumption that $\nabla\psi_{r-1}(x)\neq0$ for all $x\in\partial \mathcal{C}^r$, is a sufficient condition to establish the forward invariance of $\mathcal{C}_\delta^r$ using Nagumo's Theorem \cite[Thm. 4.1.28]{AbrahamMarsdenRatiu}. According to \eqref{eq:rho_r_dot}, for such a condition to hold it is sufficient to show that
        \begin{equation}\label{eq:gammar_cond}
            \alpha_r(\psi_{r-1}(x)) + {\varepsilon\delta^2}/{4} =0,
        \end{equation}
        for all $x\in\partial \mathcal{C}_\delta^r$. Note that for $x\in\partial \mathcal{C}_\delta ^r$ we have $\rho_r(x,\delta)=0$ and hence $\psi_{r-1}(x)=-\gamma_r(\delta)$, which, after using the definition of $\gamma_r$ from \eqref{eq:gamma}, implies \eqref{eq:gammar_cond} holds
        for all $x\in\partial \mathcal{C}_\delta ^r$. It then follows from Nagumo's Theorem that $x(0)\in\mathcal{C}_{\delta}^r\implies x(t)\in\mathcal{C}_{\delta}^r$ for all $t\in I(x(0))$ and thus $\rho_r(x(t),\delta) \geq 0$ for all $t\in I(x(0))$, which implies that $\psi_{r-1}(x(t))\geq -\gamma_r(\delta)$ for all $t\in I(x(0))$.  We proceed with a similar analysis for the remaining $\rho_i$ terms. Using the definition of $\psi_{r-1}$ from \eqref{eq:psi}, and dropping time-dependence for ease of readability, the preceding argument implies 
        \begin{equation}
            \begin{aligned}
                \dot{\psi}_{r-2} = & -\alpha_{r-1}(\psi_{r-2}(x)) + \psi_{r-1}(x) \\
                \geq & -\alpha_{r-1}(\psi_{r-2}(x)) - \gamma_r(\delta).
            \end{aligned}
        \end{equation}
        Noting that $\dot{\rho}_{r-1}=\dot{\psi}_{r-2}$ we have
        \begin{equation}\label{eq:rho_r1_dot}
            \dot{\rho}_{r-1} \geq -(\alpha_{r-1}(\psi_{r-2}(x)) + \gamma_r(\delta)).
        \end{equation}
        Now for $x\in\mathcal{C}_\delta$ we have $\rho_{r-1}(x,\delta)\geq 0$, which, by the same reasoning as in the case of $\rho_r$, implies that $\alpha_{r-1}(\psi_{r-2}(x))\geq -\alpha_{r-1}(-\gamma_{r-1}(\delta))=-\gamma_{r}(\delta)$, and thus $\rho_{r-1}$ may decrease for $x\in \mathcal{C}_\delta$ by \eqref{eq:rho_r1_dot}. However, for $x\in \partial\mathcal{C}_\delta^{r-1}$ we have $\rho_{r-1}(x,\delta)=\psi_{r-2}(x) + \gamma_{r-1}(\delta)=0$ and it follows from the definition of $\gamma_{r-1}$ \eqref{eq:gamma} that $\dot{\rho}_{r-1}\geq 0$ for any $x\in \partial\mathcal{C}_{\delta}^{r-1}$. It then again follows from Nagumo's Theorem that $x(0)\in\mathcal{C}_\delta^{r-1}\implies x(t)\in\mathcal{C}_{\delta}^{r-1}$ for all $x(t)\in I(x(0))$. Following the proof of \cite[Thm. 3]{WeiTAC21-hocbf}, one can take analogous steps to those outlined above for the remaining $\psi_i$ terms to show that $x(0)\in\mathcal{C}_\delta\implies \rho_i(x(t),\delta)\geq 0$ for all $t\in I(x(0))$ and $i\in\{1,\dots,r\}$, implying the forward invariance of $\mathcal{C}_\delta$.
    \end{proof}

    Given an eISS-CLF and ISSf-HOCBF one can compute inputs satisfying the conditions of Theorem \ref{theorem:issf} and (relaxed) conditions of Theorem \ref{theorem:iss} by solving the QP
    \begin{equation}\label{eq:ISSf-ISS-QP}
        \begin{aligned}
            \min_{u\in\R^m}\quad & \tfrac{1}{2}\|u - k(x,\hat{\theta})\|^2 \\
            \text{s.t.} \quad & \dot{\hat{\psi}}(x,\hat{\theta},u) \geq -\alpha_r(\psi_{r-1}(x)) + \tfrac{\|L_F\psi_{r-1}(x)\|^2}{\varepsilon},
        \end{aligned}
    \end{equation}
    where $k(x,\hat{\theta})$ satisfies the conditions of Theorem \ref{theorem:iss}. Rather than combining the CLF and CBF in a single QP as in \cite{AmesTAC17,JankovicAutomatica18}, here we filter the solution of \eqref{eq:eISS-CLF-QP} through \eqref{eq:ISSf-ISS-QP}. This obviates the need to select an appropriate penalty on the relaxation of the CLF constraint, which can lead to controllers with large Lipschitz constants if chosen improperly \cite{JankovicAutomatica18}.

    \begin{remark}
        In contrast to existing approaches \cite{LopezLCSS21,HovakimyanCDC20,DixonACC21,PanagouECC21,CohenACC22}, implementation of \eqref{eq:ISSf-ISS-QP} does not explicitly require knowledge of bounds on the estimation error, which may be unavailable or overly conservative. Furthermore, a single set of parameter estimates are shared between the safety and stability constraint, whereas the methods developed in \cite{TaylorACC20,LopezLCSS21,CohenACC22} require separate estimates of the same parameters. As noted earlier, the drawback of our modular approach is that \eqref{eq:ISSf-ISS-QP} only guarantees forward invariance of an inflated safe set rather than the original constraint set. If a bound on the estimation error is known (e.g., by using the projection operator \cite[App. E]{Krstic} to bound the parameter estimates), then knowledge of the bound $\|\tilde{\theta}\|_{\infty}\leq\delta$ can be incorporated into the ISSf-HOCBF by adding a safety margin to $h$.
    \end{remark}

    \section{Numerical Examples}\label{sec:sim}
    We consider a simple obstacle avoidance scenario for a planar mobile robot modeled as a double integrator with nonlinear drag effects of the form \cite{LopezACC19}
    \begin{equation}\label{eq:dbl_int}
        \ddot{q} = -D\dot{q}\|\dot{q}\| + u,
    \end{equation}
    where $q=[q_1\;q_2]\T\in\R^2$ denotes the robot's position, $u\in\R^2$ its commanded acceleration, and $D\in\R^{2\times 2}$ a diagonal matrix of damping coefficients. Defining $x\coloneqq [q\T\;\dot{q}\T]\T\in\R^4$ allows \eqref{eq:dbl_int} to be expressed as in \eqref{eq:dyn} with
    \begin{equation}
        \dot{x} = 
        \underbrace{
            \begin{bmatrix}
                \dot{q} \\ 0
            \end{bmatrix}
        }_{f(x)}
        +
        \underbrace{
            \begin{bmatrix}
                0_{2\times 2} \\ \text{diag}(\dot{q}\|\dot{q}\|)
            \end{bmatrix}
        }_{F(x)}
        \underbrace{
            \begin{bmatrix}
                D_1 \\ D_2
            \end{bmatrix}
        }_{\theta}
        +
        \underbrace{
            \begin{bmatrix}
                0_{2\times 2} \\ I_{2\times 2}
            \end{bmatrix}
        }_{g(x)}
        u,
    \end{equation}
    where $0_{2\times 2}\in\R^{2\times 2}$ is a $2\times 2$ matrix of zeros, $I_{2\times 2}$ is a $2\times 2$ identity matrix, $\text{diag}(\cdot)$ constructs a diagonal matrix from a vector, and $D_1,D_2\in\R_{>0}$ are the unknown drag coefficients. Our control objective is to drive \eqref{eq:dbl_int} to the origin while avoiding an obstacle in the workspace and learning the uncertain parameters online. To estimate the uncertain parameters, we leverage a general class of concurrent learning parameter estimation algorithms  \cite{ChowdharyCDC10,ChowdharyIJACSP13} based on the method developed in \cite{DixonIJACSP19}. This method works based on the observation that, along state-control trajectory $(x(\cdot),u(\cdot))$, system \eqref{eq:dyn} can be expressed as
    \begin{equation*}
        \begin{aligned}
            \int_{t - \Delta t}^{t}\dot{x}(s)ds = & \int_{t - \Delta t}^{t}f(x(s))ds + \int_{t - \Delta t}^{t}F(x(s))ds\theta \\ 
            & +  \int_{t - \Delta t}^{t}g(x(s))u(s)ds,
        \end{aligned}
    \end{equation*}
    for all $t\geq \Delta t$, where $\Delta t\in\mathbb{R}_{>0}$ is the length of an integration window. Defining 
    \begin{equation*}
        \begin{aligned}
            \mathcal{Y}(t)\coloneqq & \int_{t - \Delta t}^{t}(\dot{x}(s) - f(x(s)) - g(x(s))u(s) )ds, \\
            \mathcal{F}(t) \coloneqq & \int_{t - \Delta t}^{t}F(x(s))ds
        \end{aligned}
    \end{equation*}
    yields the linear relationship for the uncertain parameters
    \begin{equation}
        \mathcal{Y}(t) = \mathcal{F}(t)\theta.
    \end{equation}
    Despite the appearance of $\dot{x}$ in $\mathcal{Y}$, computing $\mathcal{Y}$ only requires state measurements since $\int_{t - \Delta t}^{t}\dot{x}(s)ds=x(t) - x(t-\Delta t)$. The parameters can then be recursively estimated online by storing values of $\mathcal{Y}$ and $\mathcal{F}$ at run-time in a history stack\footnote{This data is stored in the history stack using the singular value maximizing algorithm from \cite{ChowdharyACC11}, which records data so that the convergence rate of the parameter estimates is always non-decreasing.} $\mathcal{H}=\{(\mathcal{Y}_j,\mathcal{F}_j)\}_{j=1}^N$, which is used to update the parameter estimates to minimize the squared prediction error 
    \begin{equation*}
        E(\hat{\theta}) = \sum_{j=1}^{N}\|\mathcal{Y}_j - \mathcal{F}_j\hat{\theta}\|^2.
    \end{equation*}
    To this end, we consider the following class of update laws
    \begin{equation}\label{eq:theta-hat-dot}
        \dot{\hat{\theta}}=-\Gamma(t)\nabla E(\hat{\theta})\T = \Gamma(t)\sum_{j=1}^N \mathcal{F}_j\T(\mathcal{Y}_j - \mathcal{F}_j\hat{\theta}),
    \end{equation}
    which serves as a general template for particular update laws based on the properties of $\Gamma(\cdot)$ as follows:
    \begin{subequations}\label{eq:Gamma-dot}
        \begin{equation}\label{eq:GD}
            \dot{\Gamma} = 0,
        \end{equation}
        \begin{equation}\label{eq:RLS}
            \dot{\Gamma} = -\Gamma\Bigg[\sum_{j=1}^N \mathcal{F}_j\T\mathcal{F}_j\Bigg]\Gamma,
        \end{equation}
        \begin{equation}\label{eq:RLS-beta}
            \dot{\Gamma} = \beta\Gamma -\Gamma\Bigg[\sum_{j=1}^N \mathcal{F}_j\T\mathcal{F}_j\Bigg]\Gamma,
        \end{equation}
        \begin{equation}\label{eq:RLS-beta-t}
            \dot{\Gamma} = \beta\left[1 - \frac{\|\Gamma\|}{\bar{\Gamma}}\right]\Gamma -\Gamma\Bigg[\sum_{j=1}^N \mathcal{F}_j\T\mathcal{F}_j\Bigg]\Gamma,
    \end{equation}
    \end{subequations}
    where $\beta\in\R_{>0}$ is a forgetting/discount factor and $\bar{\Gamma}\in\R_{>0}$ is a user-defined constant that bounds $\|\Gamma(t)\|$. With $\dot{\Gamma}$ as in \eqref{eq:Gamma-dot}, the update law in \eqref{eq:theta-hat-dot} corresponds to: \eqref{eq:GD} gradient descent; \eqref{eq:RLS} recursive least squares (RLS); \eqref{eq:RLS-beta} RLS with a forgetting/discount factor; \eqref{eq:RLS-beta-t} RLS with a variable forgetting factor. An overview of these online parameter estimation algorithms, including their benefits and drawbacks, can be found in \cite[Ch. 8.7]{Slotine}. We emphasize that the purpose of our numerical example is not necessarily to establish superiority of one algorithm over the others; rather, our goal is to demonstrate that, under the assumptions posed in Lemma \ref{lemma:estimator}, the stability/safety guarantees of the controller can be decoupled from the design of the parameter estimator, which allows considerable freedom in selecting an estimation algorithm best suited for the problem at hand. 
    
    We demonstrate the modularity of our approach (i.e., the ability to decouple the design of the estimator from the controller) by running a set of simulations with randomly sampled initial conditions for the system state and estimated parameters under each algorithm, and show that, for a given level of uncertainty, the ISSf guarantees are invariant to the particular choice of parameter estimator. For each estimation algorithm we produce 25 different trajectories by uniformly sampling the initial state from $[-1.8,-2.2]\times[1.8,2.2]\times\{0\}\times\{0\}\subset\R^4$ and the initial parameter estimates from $[0,3]^2\subset\R^2$; the true parameters are set to $\theta=[0.8\;1.4]\T$. The hyperparameters for the estimation algorithms are selected as $N=20$, $\Gamma(0)=100I_{2\times 2}$, $\beta=1$, $\bar{\Gamma}=1000$. The stabilization objective is achieved by considering the eISS-CLF candidate $V(x) = \tfrac{1}{2}\|q\|^2 + \tfrac{1}{2}\|q + \dot{q}\|^2$ with $c_3=1$ and $\varepsilon_V=20$. The safety objective is achieved by considering the constraint function $h(x)=\|q - q_o\|^2 - R_o^2$, where $q_o=[-1\;1]\T$ is the center of the circular obstacle and $R_o=0.5$ its radius, which has relative degree 2 for \eqref{eq:dbl_int} with respect to both $u$ and $\theta$ as required by Assumption \ref{assuption:relative-degree}. This constraint function is used to construct an ISSf-HOCBF candidate with $\alpha_1(s)=s$, $\alpha_2(s)=\tfrac{1}{2}s$, and $\varepsilon_h=1$.

    For each simulation, the closed-loop trajectory is generated by the controller in \eqref{eq:ISSf-ISS-QP}, the results of which are provided in Fig. \ref{fig:avg_error} and Fig. \ref{fig:traj}. As shown in Fig. \ref{fig:traj}, the trajectories under each update law remain safe and converge to the origin, whereas Fig. \ref{fig:avg_error} illustrates the convergence of the parameter estimation error to zero for each estimation algorithm as predicted by Lemma \ref{lemma:estimator}. The curves in Fig. \ref{fig:avg_error} represent the mean and standard deviation of the parameter estimation error over time across all simulations for each estimation algorithm. The results in Fig. \ref{fig:avg_error} illustrate that, on average, the RLS with forgetting factor estimator \eqref{eq:RLS-beta} produces the fastest convergence of the parameters estimates while also exhibiting low variance across different trajectories. The standard RLS algorithm \eqref{eq:RLS} produces the slowest convergence, which is expected given that, in general, this algorithm cannot guarantee exponential convergence\footnote{This also implies that \eqref{eq:RLS} does not satisfy all the conditions of Lemma \ref{lemma:estimator}. Despite this, note that boundedness of the estimates is sufficient to establish ISS and ISSf.} of the parameter estimates, whereas the others can \cite[Ch. 8.7]{Slotine}.

    \begin{figure}
        \centering
        \includegraphics{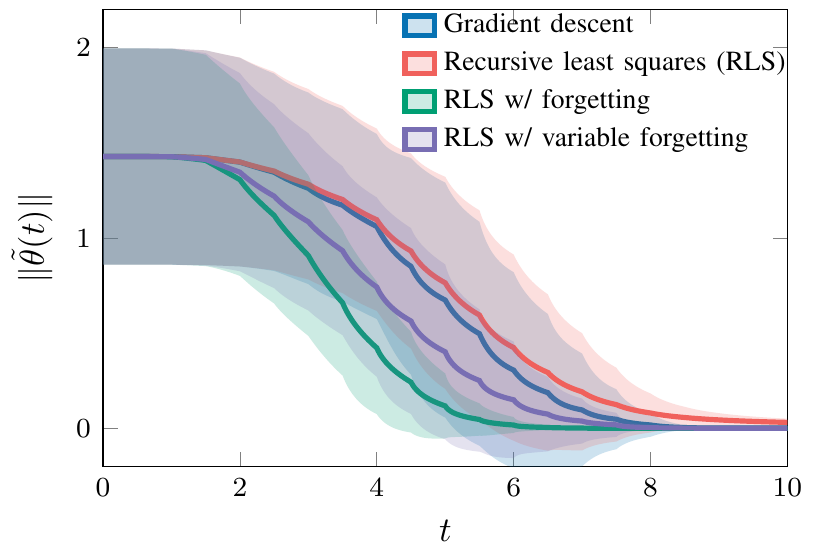}
        \caption{Mean and standard deviation of the norm of the parameter estimation over time generated by each parameter estimator. The solid lines indicate the average value of $\|\tilde{\theta}(t)\|$ across each simulation, and the ribbon surrounding each line corresponds to one standard deviation from the mean.}
        \label{fig:avg_error}
    \end{figure}

    \begin{figure}
        \centering
        \includegraphics{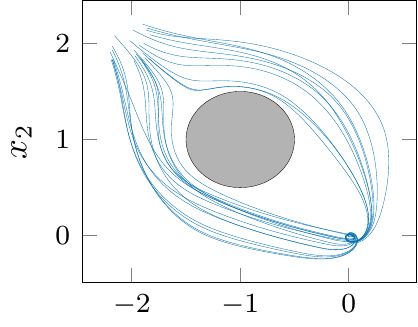}
        \hfill
        \includegraphics{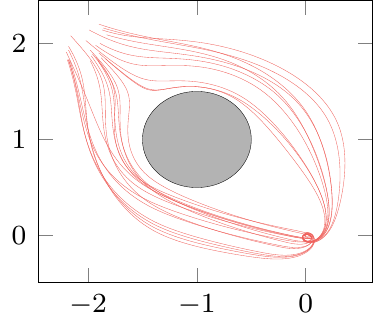}

        \includegraphics{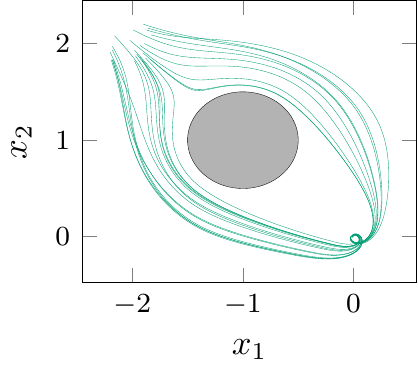}
        \hfill
        \includegraphics{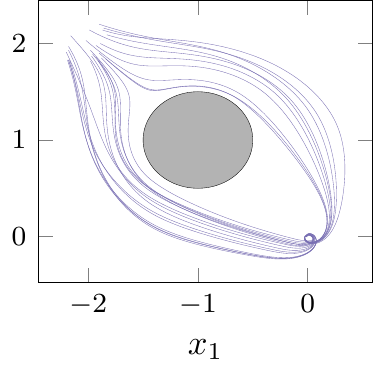}
        \caption{State trajectories generated by each estimation algorithm projected onto the $x_1$-$x_2$ plane. In each plot the gray disk denotes the obstacle. The colors in each plot share the same interpretation as those in Fig. \ref{fig:avg_error}.}
        \label{fig:traj}
    \end{figure}

    In the preceding examples, safety was enforced by choosing an appropriate value of $\varepsilon_h$ for the given level of uncertainty. In theory, $\mathcal{C}_\delta\rightarrow\mathcal{C}$ as $\varepsilon_h\rightarrow 0$; however, taking $\varepsilon_h$ very small may require a significant amount of control effort that could exceed physical actuator limits. An alternative approach to reducing safety violations in this ISSf setting is through fast adaptation - if the parameter estimates quickly converge to their true values then the estimated dynamics used in \eqref{eq:ISSf-ISS-QP} to generate control actions will be very close to the true dynamics. In Fig. \ref{fig:learning}, we generated additional trajectories of the closed-loop system under the gradient descent update law \eqref{eq:GD} and the RLS update law with a forgetting factor \eqref{eq:RLS-beta} using the same setup as in the previous example, but with different levels of initial parameter uncertainty. As demonstrated in Fig. \ref{fig:learning}, the trajectories under the RLS update law avoid the obstacle for the given initial parameter estimation errors via fast adaptation, whereas the trajectories under the gradient descent algorithm violate the safety constraint for higher levels of uncertainty. Hence, rather than using a more robust controller (by decreasing $\varepsilon_h$), which may be overly conservative if bounds on $\theta$ are unknown, one can endow the ISSf controller with stronger safety guarantees through the use of a more efficient estimation algorithm.

    \begin{figure}
        \centering
        \includegraphics{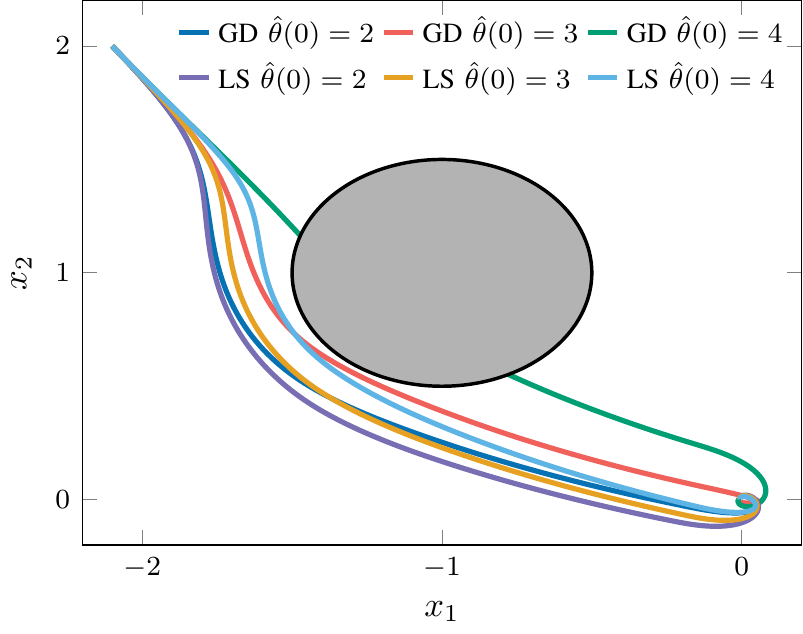}

        \includegraphics{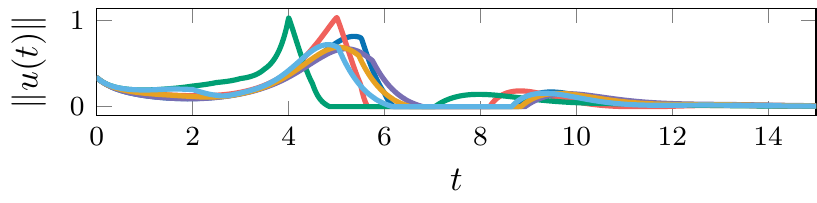}

        \includegraphics{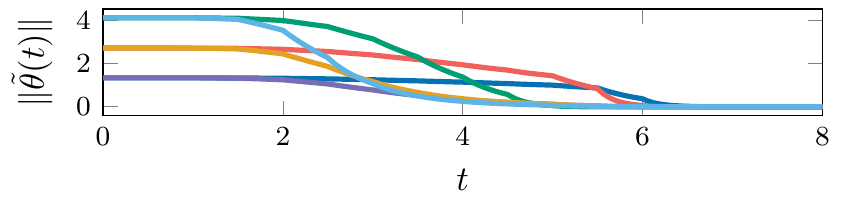}
        \caption{Comparison between trajectories generated by the gradient descent (GD) learning algorithm \eqref{eq:GD} and the recursive least squares algorithm (LS) with a forgetting factor \eqref{eq:RLS-beta} for different initial parameter estimates. The top plot displays the system trajectories, the middle illustrates the control trajectories, and the bottom illustrates the parameter estimation error.}\label{fig:learning}
    \end{figure}

    \section{Conclusion}\label{sec:conclusion}
    We presented a modular approach to safe adaptive control using CLFs and CBFs. In particular, we unified the concepts of ISS and ISSf to allow for freedom in the estimation algorithm used to learn the uncertain parameters while maintaining ISS and ISSf guarantees. Our hope is that this work facilitates the application of more advanced techniques from the machine learning literature \cite{RechtJMLR21} to parameter estimation and learning in a safety-critical setting.
    
    We showed in Sec. \ref{sec:stability} that convergence of the uncertain parameters is sufficient for asymptotic stability despite the use of an ISS-CLF. Based on the empirical results in this paper, it seems possible that a similar phenomenon may arise in the context of safety. Estimators that exhibited faster convergence maintained safety in the presence of large uncertainty when those with slower convergence did not - formally investigating the relation between convergence and safety thus remains an open research direction.

    \section*{Acknowledgements}
    We thank the anonymous reviewers for their careful reading of this paper and their helpful comments that have improved the quality of this work.

    \bibliographystyle{ieeetr}
    \bibliography{
        biblio/adaptive,
        biblio/books,
        biblio/barrier,
        biblio/hybrid,
        biblio/mpc,
        biblio/learning,
        biblio/nonlinear}

\begin{thebibliography}{10}

\bibitem{Krstic}
M.~Krsti\'{c}, I.~Kanellakopoulos, and P.~Kokotovi\'{c}, {\em Nonlinear and
  adaptive control design}.
\newblock John Wiley \& Sons, 1995.

\bibitem{Slotine}
J.~J.~E. Slotine and W.~Li, {\em Applied Nonlinear Control}.
\newblock Prentice Hall, 1991.

\bibitem{KrsticSCL95}
M.~Krsti\'{c} and P.~Kokotovi\'{c}, ``Control lyapunov functions for adaptive
  nonlinear stabilization,'' {\em Syst. Control Lett.}, vol.~26, no.~1,
  pp.~17--23, 1995.

\bibitem{KrsticAutomatica96}
M.~Krsti\'{c} and P.~Kokotovi\'{c}, ``Modular approach to adaptive nonlinear
  stabilization,'' {\em Automatica}, vol.~32, no.~4, pp.~625--629, 1996.

\bibitem{SontagSCL89}
E.~D. Sontag, ``A universal construction of artstein's theorem on nonlinear
  stabilization,'' {\em Syst. Control Lett.}, vol.~13, pp.~117--123, 1989.

\bibitem{ChowdharyCDC10}
G.~Chowdhary and E.~Johnson, ``Concurrent learning for convergence in adaptive
  control without persistency of excitation,'' in {\em Proc. Conf. Decis.
  Control}, pp.~3674--3679, 2010.

\bibitem{ChowdharyACC11}
G.~Chowdhary and E.~Johnson, ``A singular value maximizing data recording
  algorithm for concurrent learning,'' in {\em Proc. Amer. Control Conf.},
  pp.~3547--3552, 2011.

\bibitem{ChowdharyIJACSP13}
G.~Chowdhary, T.~Yucelen, M.~Muhlegg, and E.~N. Johnson, ``Concurrent learning
  adaptive control of linear systems with exponentially convergent bounds,''
  {\em Int. J. Adapt. Control Signal Process.}, vol.~27, no.~4, pp.~280--301,
  2013.

\bibitem{KamalapurkarTAC17}
R.~Kamalapurkar, B.~Reish, G.~Chowdhary, and W.~E. Dixon, ``Concurrent learning
  for parameter estimation using dynamic state-derivative estimators,'' {\em
  IEEE Trans. Autom. Control}, vol.~62, no.~7, pp.~3594--3601, 2017.

\bibitem{DixonIJACSP19}
A.~Parikh, R.~Kamalapurkar, and W.~E. Dixon, ``Integral concurrent learning:
  Adaptive control with parameter convergence using finite excitation,'' {\em
  Int. J. Adapt. Control Signal Process.}, vol.~33, no.~12, pp.~1775--1787,
  2019.

\bibitem{SontagTAC89}
E.~D. Sontag, ``Smooth stabilization implies coprime factorization,'' {\em IEEE
  Trans. Autom. Control}, vol.~34, no.~4, pp.~435--443, 1989.

\bibitem{TaylorACC20}
A.~J. Taylor and A.~D. Ames, ``Adaptive safety with control barrier
  functions,'' in {\em Proc. Amer. Control Conf.}, pp.~1399--1405, 2020.

\bibitem{LopezLCSS21}
B.~T. Lopez, J.~J. Slotine, and J.~P. How, ``Robust adaptive control barrier
  functions: An adaptive and data-driven approach to safety,'' {\em IEEE Contr.
  Syst. Lett.}, vol.~5, no.~3, pp.~1031--1036, 2021.

\bibitem{HovakimyanCDC20}
P.~Zhao, Y.~Mao, C.~Tao, N.~Hovakimyan, and X.~Wang, ``Adaptive robust
  quadratic programs using control lyapunov and barrier functions,'' in {\em
  Proc. Conf. Decis. Control}, pp.~3353--3358, 2020.

\bibitem{DixonACC21}
A.~Isaly, O.~S. Patil, R.~G. Sanfelice, and W.~E. Dixon, ``Adaptive safety with
  multiple barrier functions using integral concurrent learning,'' in {\em
  Proc. Amer. Control Conf.}, pp.~3719 -- 3724, 2021.

\bibitem{PanagouECC21}
M.~Black, E.~Arabi, and D.~Panagou, ``A fixed-time stable adaptation law for
  safety-critical control under parametric uncertainty,'' in {\em Proc. Eur.
  Control Conf.}, pp.~1328--1333, 2021.

\bibitem{AzimiTAC21}
V.~Azimi and S.~Hutchinson, ``Exponential control lyapunov-barrier function
  using a filtering-based concurrent learning adaptive approach,'' {\em IEEE
  Trans. Autom. Control}, 2021.

\bibitem{CohenACC22}
M.~H. Cohen and C.~Belta, ``High order robust adaptive control barrier
  functions and exponentially stabilizing adaptive control lyapunov
  functions,'' in {\em Proc. Amer. Control Conf.}, pp.~2233--2238, 2022.

\bibitem{Blanchini}
F.~Blanchini and S.~Miani, {\em Set-theoretic methods in control}.
\newblock Springer, 2008.

\bibitem{SadraCDC17}
S.~Sadraddini and C.~Belta, ``Formal methods for adaptive control of dynamical
  systems,'' in {\em Proc. Conf. Decis. Control}, pp.~1782--1787, 2017.

\bibitem{CohenNAHS23}
M.~H. Cohen, Z.~Serlin, K.~J. Leahy, and C.~Belta, ``Temporal logic guided safe
  model-based reinforcement learning: A hybrid systems approach,'' {\em
  Nonlinear Analysis: Hybrid Systems}, vol.~47, p.~101295, 2023.

\bibitem{OzayHSCC22}
K.~J. Rutledge and N.~Ozay, ``Correct-by-construction exploration and
  exploitation for unknown linear systems using bilinear optimization,'' in
  {\em Proc. Int. Conf. Hybrid Syst. Comp. Control}, 2022.

\bibitem{AmesTAC17}
A.~D. Ames, X.~Xu, J.~W. Grizzle, and P.~Tabuada, ``Control barrier function
  based quadratic programs for safety critical systems,'' {\em IEEE Trans.
  Autom. Control}, vol.~62, no.~8, pp.~3861--3876, 2017.

\bibitem{SreenathACC16}
Q.~Nguyen and K.~Sreenath, ``Exponential control barrier functions for
  enforcing high relative-degree safety-critical constraints,'' in {\em Proc.
  Amer. Control Conf.}, pp.~322--328, 2016.

\bibitem{WeiTAC21-hocbf}
W.~Xiao and C.~Belta, ``High order control barrier functions,'' {\em IEEE
  Trans. Autom. Control}, vol.~67, no.~7, pp.~3655--3662, 2022.

\bibitem{DimosTAC22-hocbf}
X.~Tan, W.~S. Cortez, and D.~V. Dimarogonas, ``High-order barrier functions:
  robustness, safety and performance-critical control,'' {\em IEEE Trans.
  Autom. Control}, vol.~67, no.~6, pp.~3021--3028, 2022.

\bibitem{PanagouCDC21}
J.~Breeden and D.~Panagou, ``High relative degree control barrier functions
  under input constraints,'' in {\em Proc. Conf. Decis. Control},
  pp.~6119--6124, 2021.

\bibitem{AmesLCSS19}
S.~Kolathaya and A.~D. Ames, ``Input-to-state safety with control barrier
  functions,'' {\em IEEE Contr. Syst. Lett.}, vol.~3, no.~1, pp.~108--113,
  2019.

\bibitem{AmesLCSS22-ISSf}
A.~Alan, A.~J. Taylor, C.~R. He, G.~Orosz, and A.~D. Ames, ``Safe controller
  synthesis with tunable input-to-state safe control barrier functions,'' {\em
  IEEE Contr. Syst. Lett.}, vol.~6, pp.~908--913, 2022.

\bibitem{AmesArXiV-ISSf}
A.~Alan, A.~J. Taylor, C.~R. He, A.~D. Ames, and G.~Orosz, ``Control barrier
  functions and input-to-state safety with application to automated vehicles,''
  {\em arXiv preprint arXiv:2206.03568}, 2022.

\bibitem{AmesACC18-ISS-CLF}
S.~Kolathaya, J.~Reher, A.~Hereid, and A.~D. Ames, ``Input to state stabilizing
  control lyapunov functions for robust bipedal robotic locomotion,'' in {\em
  Proc. Amer. Control Conf.}, 2018.

\bibitem{Khalil}
H.~K. Khalil, {\em Nonlinear Systems}.
\newblock Prentice Hall, 3~ed., 2002.

\bibitem{KamalapurkarAutomatica16-mbrl}
R.~Kamalapurkar, P.~Walters, and W.~E. Dixon, ``Model--based reinforcement
  learning for approximate optimal regulation,'' {\em Automatica}, vol.~64,
  pp.~94--104, 2016.

\bibitem{DeptulaAutomatica21}
P.~Deptula, Z.~I. Bell, F.~M. Zegers, R.~A. Licitra, and W.~E. Dixon,
  ``Approximate optimal influence over an agent through an uncertain
  interaction dynamic,'' {\em Automatica}, vol.~134, pp.~1--13, 2021.

\bibitem{JankovicAutomatica18}
M.~Jankovic, ``Robust control barrier functions for constrained stabilization
  of nonlinear systems,'' {\em Automatica}, vol.~96, pp.~359--367, 2018.

\bibitem{XuArXiV22}
Z.~Lyu, X.~Xu, and Y.~Hong, ``Small-gain theorem for safety verification under
  high-relative-degree constraints,'' {\em arXiv preprint arXiv:2204.04376},
  2022.

\bibitem{AbrahamMarsdenRatiu}
R.~Abraham, J.~E. Marsden, and T.~Ratiu, {\em Manifolds, tensor analysis, and
  applications}.
\newblock Springer, second~ed., 1988.

\bibitem{LopezACC19}
B.~T. Lopez, J.~J.~E. Slotine, and J.~P. How, ``Dynamic tube mpc for nonlinear
  systems,'' in {\em Proc. Amer. Control Conf.}, pp.~1655--1662, 2019.

\bibitem{RechtJMLR21}
A.~C. Wilson, B.~Recht, and M.~I. Jordan, ``A lyapunov analysis of accelerated
  methods in optimization,'' {\em Journal of Machine Learning Research},
  vol.~22, pp.~1--34, 2021.

\end{thebibliography}
\end{document}